\documentclass[pamm,a4paper,fleqn]{w-art}
\usepackage{times,cite,w-thm}
\usepackage[T1]{fontenc}
\usepackage[utf8]{inputenc}
\usepackage{amsfonts}
\usepackage{graphicx}

\makeatletter
\def\copyrightinfo{}
\def\@copyrightline{}

\def\ps@titlepage{%
  \let\@oddhead\@empty
  \let\@evenhead\@empty
  \let\@oddfoot\@empty
  \let\@evenfoot\@empty
}
\makeatother
\begin{document}
\def\leftmark{M. Papace et al.: Probing for trace estimation in LQCD}
\def\rightmark{Preprint}

%%
%%    The information for the title page will be placed between
%%    \begin{document} and \maketitle. The order of most entries
%%    is determined by the class file and cannot be changed by
%%    rearranging them. The maketitle command follows after the
%%    abstract.
%%
%%    The following commands will be updated by the publisher:
%%
%%    \renewcommand{\copyrightyear}{2016}
%%    \DOIsuffix{pamm.20161zzzz}
%%    \Volume{16} 
%%    \Year{2016} 
%%    \pagespan{1}{}
%%
%%    The short title is optional:

\TitleLanguage[EN]
\title[Probing and graph coloring for trace estimation in LQCD]
      {Probing and graph coloring techniques for trace estimation in Lattice QCD}

%% Please delete not needed author entries.
%% Information for the first author.
\author{\firstname{Mario} \lastname{Papace}\inst{1,3,} \footnote{\ElectronicMail{papace@uni-wuppertal.de}}} \address[\inst{1}]{\CountryCode[DE]Bergische Universität Wuppertal}

%%
%%    Information for the second author
\author{\firstname{Andreas} \lastname{Frommer}\inst{1,}%
\footnote{\ElectronicMail{frommer@uni-wuppertal.de}}}
%%
%%    Information for the third author
\author{\firstname{Jose} \lastname{Jimenez-Merchan}\inst{1,}%
\footnote{\ElectronicMail{jimenezmerchan@uni-wuppertal.de}}}
\author{\firstname{Bruno} \lastname{Lang}\inst{1,}%
\footnote{\ElectronicMail{lang@uni-wuppertal.de}}}
\author{\firstname{Gustavo} \lastname{Ramirez-Hidalgo}\inst{2,}%
\footnote{\ElectronicMail{g.ramirez.hidalgo@fz-juelich.de}}} 
\address[\inst{2}]{\CountryCode[DE]J{\"u}lich Supercomputing Centre, Wilhelm-Johnen-Straße, J{\"u}lich}
\author{\firstname{Christian} \lastname{Schneider}\inst{1,3,}%
\footnote{\ElectronicMail{schneider.christian@ucy.ac.cy}}} 
\address[\inst{3}]{\CountryCode[CY]University of Cyprus, 1 Panepistimiou Avenue, Aglantzia}
%%
%%    \dedicatory{This is a dedicatory.}
%%
%%    Abstract is required.
\AbstractLanguage[EN]
\begin{abstract}
The computation of $\mathrm{Tr}[D^{-1}]$, where $D$ is the Wilson-Dirac matrix of Lattice QCD, is a fundamental and computationally demanding task with applications to disconnected hadronic correlation functions. Since $D^{-1}$ is a dense matrix of prohibitive size, its trace cannot be computed exactly, and one must resort to stochastic estimation via the Hutchinson estimator. The variance of the resulting estimation, however, can be large, as it is dominated by the off-diagonal entries of $D^{-1}$. We review the stochastic probing technique, which reduces the variance by constructing structured sampling vectors from distance-$d$ colorings of the graph associated with $D$, exploiting the exponential off-diagonal decay of $D^{-1}$ to eliminate dominant short-range contributions to the variance. We then present a novel multiplier-based coloring scheme, which achieves valid distance-$d$ colorings at arbitrary distances with significantly fewer colors than the established hierarchical probing construction. We prove that at any intermediate coloring falling between two consecutive hierarchical levels, the multiplier-based estimator achieves strictly lower variance than the partial hierarchical estimator, for large enough $d$. This is confirmed by numerical experiments showing that the multiplier-based variance decreases smoothly and monotonically with the number of colors, avoiding the irregular behavior affecting hierarchical probing at intermediate colorings, and achieving a substantial improvement in relative accuracy.
\end{abstract}
%% maketitle must follow the abstract.
\maketitle                   % Produces the title.
\thispagestyle{empty}

\section{Introduction} \label{sec:intro}

The computation of the trace of a matrix function $f(A)$, where $A \in \mathbb{C}^{n \times n}$ is a large sparse matrix, is a fundamental problem in numerical linear algebra with applications in many areas of computational science. Functions of interest include the matrix logarithm ~\cite{SaibabaAlexanderianIpsen2017}, arising in the computation of the matrix determinant $\log (\det A) = \mathrm{Tr}[\log A]$, the matrix square root and its inverse ~\cite{Frommer2024}, relevant in the context of lattice field theories, and the matrix exponential ~\cite{AlMohyHigham2011}, which appears in the simulation of quantum many-body systems. In all these cases, the matrix $A$ is too large for $f(A)$ to be formed explicitly, and one must resort to methods that access $f(A)$ only implicitly, through matrix-vector products of the form $f(A)v$ for a given vector $v$.

In this work we focus on the particular case $f(A) = A^{-1}$, and in particular on the computation of $\mathrm{Tr}[D^{-1}]$, where $D$ is the Wilson-Dirac matrix arising in Lattice Quantum Chromodynamics (LQCD). This quantity and its variants appear as part of the computation of fundamental observables in a wide range of LQCD applications. The disconnected contribution to hadronic correlation functions~\cite{Bali2010}, which governs the flavor-singlet sector of the hadron spectrum, requires the evaluation of $\mathrm{Tr}[\Gamma D^{-1}]$ for various choices of the matrix $\Gamma$. The scalar condensate $\langle \bar{\psi}\psi \rangle = -\mathrm{Tr}[D^{-1}]/V$, where $V$ is the spacetime volume, is an order parameter for chiral symmetry breaking~\cite{Banks1980} and is directly proportional to $\mathrm{Tr}[D^{-1}]$. The quark number susceptibility, relevant to the study of the QCD phase diagram and the deconfinement transition~\cite{Aoki2006}, likewise reduces to traces of $D^{-1}$ and its derivatives with respect to the chemical potential.

The Wilson-Dirac matrix $D \in \mathbb{C}^{n \times n}$, with $n = 12 \cdot T \cdot  L^3$, is defined on a four-dimensional Euclidean spacetime toroidal lattice of temporal extent $T$ and spatial extent $L$, and has a characteristic block structure that reflects the internal degrees of freedom of the quark field. Each lattice site $x$ carries $N_s  N_c = 12$ degrees of freedom, corresponding to $N_s = 4$ Dirac spin components and $N_c = 3$ color charges, so the quark field at site $x$ is a vector $\psi(x) \in \mathbb{C}^{12}$. Accordingly, $D$ is composed of $12 \times 12$ blocks $D_{xy} \in \mathbb{C}^{12 \times 12}$, with nonzero blocks only between nearest-neighbor sites on the lattice. This makes $D$ a large sparse matrix, with $\mathcal{O}(1)$ nonzero blocks per row, but its inverse $D^{-1}$ is a dense matrix. For realistic lattice volumes currently used in simulations, $n$ can reach $\mathcal{O}(10^9)$ or larger.
%, making any direct computation of $D^{-1}$, which would require $\mathcal{O}(n^3)$ arithmetic operations and $\mathcal{O}(n^2)$ memory, completely impractical. Even forming a single diagonal entry of $D^{-1}$ requires one full linear solve with $D$, making exact computation of $\mathrm{Tr}[D^{-1}]$ via direct summation prohibitively expensive. 
This makes an explicit computation of all diagonal entries of $D^{-1}$ practically infeasible: For each such entry we have to solve one system with $D$, and even with an optimal (multigrid) solver (cf.~\cite{DDalphaAMG-paper}, e.g.)  of complexity $\mathcal{O}(n)$ this would result in a prohibitively large total cost of $\mathcal{O}(n^2)$. 
%Extracting the full trace from individual solves would require $n$ of them, each at a cost of $\mathcal{O}(n)$ thanks to the sparsity of $D$ and the availability of optimal multigrid solvers~\cite{DDalphaAMG-paper}, making the total 
%cost of $\mathcal{O}(n^2)$ entirely prohibitive for the lattice sizes encountered in practice.

The standard approach to circumvent this bottleneck is {\em stochastic trace estimation}. The Hutchinson estimator~\cite{Hutchinson1990} expresses the trace as an expectation value,
\begin{equation}\label{eq:hutchinson}
    \mathrm{Tr}[D^{-1}] = \mathbb{E}\!\left[x^\dagger D^{-1} x\right]
    \approx \frac{1}{N} \sum_{j=1}^{N} x_j^\dagger D^{-1}\, x_j,
\end{equation}
where $x_j \sim U(-1,+1)^n$ are Rademacher vectors, i.e.\ independent random vectors with independent and identically
distributed (i.i.d.) entries taking the value $\pm 1$ with equal probability $\frac{1}{2}$.  
The computation of each sample requires one linear system solve, where $D$ is the matrix of the system and $x$ is the 
right-hand side. This can be done through an iterative solver based on Krylov subspace methods or multigrid 
solvers~\cite{DDalphaAMG-paper}. The estimator is unbiased, and its variance is given by ~\cite{Hutchinson1990}
\begin{equation}\label{eq:variance}
    \mathbb{V}\!\left[x^\dagger D^{-1} x\right] = \frac{1}{2}
    \left\|\mathrm{offdiag}(D^{-1} + (D^{-1})^T)\right\|_F^2,
\end{equation}
so the statistical error is governed entirely by the off-diagonal entries of $D^{-1}$. A key structural property that can be exploited to reduce this variance is that the entries of $D^{-1}$ exhibit exponential decay with the graph distance on the lattice~\cite{Benzi1999}: 
\[
|D^{-1}_{ij}| \sim e^{-m_\pi \, d(i,j)},
\]
where $d(i,j)$ is the graph distance between sites $i$ and $j$ on the lattice and $m_{\pi} = am_{\pi}^{\text{phys}}$, where $a$ denotes the lattice spacing and $m_{\pi}^{\text{phys}}$ the physical pion mass ~\cite{Hernandez1999}. Here we work in natural units, in which the speed of light $c$ and the reduced Planck constant $\hbar$ are set to $c = \hbar = 1$, so that mass, energy, and inverse length share the same units and $m_\pi$ is dimensionless. The quantity $m_\pi$ 
sets the effective decay rate of the quark propagator in QCD~\cite{Luscher2010}: a larger pion mass corresponds to faster decay of the off-diagonal entries of $D^{-1}$, so that the variance~\eqref{eq:variance} is dominated by short-range 
off-diagonal contributions, with those at graph distance $d$ being exponentially suppressed.

This observation motivates the technique of \emph{stochastic probing} ~\cite{Frommer-probing}. The idea is to replace the random Rademacher vectors $x^{(i)}$ by structured vectors constructed from a \emph{distance-$d$ coloring} of the lattice: every lattice site is assigned a color in such a way that lattice sites that are connected via $d$ or less edges do not share the same color. Assigning $\pm 1$ values only to the nodes of a given color class, and zero 
to all others, ensures that all contributions to the variance from node pairs at distance at most $d$ cancel exactly in~\eqref{eq:variance}, eliminating the dominant terms and leaving only the exponentially small tail. The effectiveness of probing thus increases with the coloring distance $d$.

A practically important way to obtain a coloring of the lattice is \emph{hierarchical probing}~\cite{Stathopoulos-hp, Stathopoulos-hp2}. It  
organizes the coloring in a sequence of nested levels with distances $d = 2^0, 2^1, 2^2, \ldots$, doubling the coloring distance from one level to the next. The hierarchical structure ensures that solutions computed at outer levels in the 
recursion can be recycled as the distance increases, reducing the incremental cost of moving to a finer coloring. 
In this hierarchical construction, the number of colors for a distance $d=2^{i-1}$ coloring is $n_c = 2^{n_D(i-1)+1}$ for an $n_D$-dimensional toroidal lattice. 
%$at level $i \in \mathbb{N^+}$, which corresponds to $d = 2^{i-1}$ for an $n_D$-dimensional lattice, and 
%
So $n_c$ can quickly become prohibitively large as the coloring distance increases. In practice, limited computational budgets often prevent reaching the next full hierarchical level, forcing the use of an intermediate partial coloring that does not constitute a valid distance-$2^{(i-1)}$ coloring and yields only irregular and non-monotone variance.

In this work we apply a novel recent coloring scheme, called \emph{multiplier-based coloring}, to the estimation of $\mathrm{Tr}[D^{-1}]$, where $D$ is the Wilson--Dirac matrix, demonstrating that it achieves efficient distance-$d$ colorings with significantly fewer colors than the hierarchical construction. It yields approximately one order of magnitude improvement in relative accuracy at fixed computational cost in this intermediate budget regime.

The remainder of this paper is organized as follows. In Section~\ref{sec:probing} we recall the probing framework and the definition of distance-$d$ colorings in the lattice context. Section~\ref{sec:coloring} gives an overview of both the hierarchical and the multiplier-based coloring schemes and compares the number of colors that the two schemes require for physically relevant lattice geometries. In this section we also investigate situations in which one can prove that
%the assumption that the multiplier-based scheme achieves a strictly larger coloring distance than the completed hierarchical level for the same budget and that the exponential off-diagonal decay of $D^{-1}$ renders the contribution of distant pairs negligible, 
the multiplier-based estimator indeed achieves a lower variance than the partial hierarchical estimator at any intermediate color budget (Lemma~\ref{lem:multiplier_wins}). In Section~\ref{sec:results} we present numerical experiments on a realistic LQCD configuration, comparing the variance reduction achieved by plain Hutchinson, hierarchical probing and multiplier-based probing at fixed computational budget. Section~\ref{sec:conclusions} collects conclusions and an outlook on future work.
\section{Stochastic probing for trace estimation} \label{sec:probing}
Probing, in its original deterministic formulation~\cite{TangSaad2012}, is a technique for extracting the diagonal entries of a sparse matrix by evaluating the quadratic form $v^T A v$ with vectors $v$ constructed from a graph coloring of the matrix sparsity pattern.
Each probing vector $v$ is supported on a single color class, so that the corresponding matrix-vector product simultaneously reveals the entries of $A$ associated with all nodes in that 
class. Since any two distinct nodes within the same color class are non-adjacent in $G$, their contributions to $v^T A v$ are algebraically decoupled from one another. This deterministic approach requires no stochastic sampling and 
produces exact results, but it does not directly extend to the computation of the trace of a matrix inverse of a sparse matrix. We don't have direct access to the inverse as a matrix and, moreover, the inverse is usually a full matrix so that its structure graph is the fully connected graph. A coloring would then need as many colors as the dimension $n$ of the matrix which is prohibitive for larger $n$.
%for which coloring    for (This is so because the inverse of a sparse matrix is usually a full matrix. So  which requires summing all diagonal entries and would need as many probing vectors as there are nodes in the graph in the worst case, and would furthermore require explicit access to the full inverse, which is computationally prohibitive for the case of $D^{-1}$ due to its large size.

The technique used in this work to estimate  $\mathrm{Tr}[D^{-1}]$ is \emph{stochastic probing}~\cite{Wilcox-probing, Stathopoulos-hp}: rather than coloring the graph of $D^{-1}$, which is dense and computationally inaccessible, we instead perform a distance-$d$ coloring of the sparse graph $G$ associated 
with $D$ itself, and then assign i.i.d.\ Rademacher random variables to the nodes within each color class. The trace $\mathrm{Tr}[D^{-1}]$ is then recovered via the Hutchinson estimator applied independently to each class, without ever requiring explicit access to $D^{-1}$. The variance is reduced by the locality structure that $D^{-1}$ inherits from $D$ through the off-diagonal 
decay~\eqref{eq:offdiag-decay}. The result is an unbiased estimator of $\mathrm{Tr}[D^{-1}]$ whose variance is reduced relative to plain Hutchinson by an amount that grows with the coloring distance $d$, as contributions to the variance from all node pairs at graph distance 
at most $d$ are eliminated.

\subsection{Variance reduction from the decay structure of $D^{-1}$} \label{subsec:decay}
Let $G = (V, E)$ denote the undirected graph associated with $D$, where $V = \{1, \ldots, n\}$ indexes the graph nodes (lattice sites in our case) and $(p,q) \in E$ whenever $D_{pq} \neq 0$. It is well known that, for many classes of sparse matrices, the entries of the inverse decay with the graph distance from the main diagonal \cite{Giusti:2019kff}: for the Wilson-Dirac operator one has
\begin{equation}
\label{eq:offdiag-decay}
    |D^{-1}_{pq}| \sim e^{-m_{\pi} \, d_G(p,q)},
\end{equation}
where $d_G(p,q)$ is the graph distance between nodes $p$ and $q$ and $m_{\pi} = am_{\pi}^{\text{phys}}$. Returning to the variance formula~\eqref{eq:variance}, this 
decay means that the dominant contributions to $\mathbb{V}[x^\dagger D^{-1} x]$ come from off-diagonal entries of $D^{-1}$ at short graph distances, with those at distance $d$ 
contributing a term of order $e^{-m_{\pi} d}$ and therefore becoming negligible for large $d$. This suggests that if one could choose the random vectors $x$ in such a way that all 
pairs of nodes at graph distance at most $d$ contribute zero to the quadratic form $x^\dagger D^{-1} x$, the variance would be reduced to the exponentially small tail of~\eqref{eq:variance}, with the dominant short-range terms eliminated
entirely.

This is precisely the idea behind stochastic probing, which relies on a \emph{distance-$d$ coloring} of $G$, that is a map $\mathrm{col} \colon V \to \{1, \ldots, n_c\}$ satisfying
\begin{equation}
\label{eq:distance-d-coloring}
    \mathrm{col}(p) = \mathrm{col}(q) \implies d_G(p,q) > d
    \qquad \forall\, p \neq q \in V,
\end{equation}
so that any two distinct nodes sharing the same color are at graph distance strictly greater than $d$. This is equivalent to a standard (i.e.\ distance-1) vertex coloring of the $d$-th power graph $G^d$, in which two nodes are adjacent whenever $d_G(p,q) \leq d$. The minimum number of colors required for a standard coloring is the {\em chromatic number $\chi$}, and $\chi(G^d)$ is then the chromatic number of a distance-$d$ coloring.
Given such a coloring with $n_c$ color classes $\mathcal{C}_1, \ldots,
\mathcal{C}_{n_c}$, one constructs for each class $m$ a \emph{stochastic probing vector} $v^{(m)}$ through the following: 
\begin{equation}
    v^{(m)}_p = \begin{cases} s_p, & \text{if color}(p) = m   \\ 0, & \text{otherwise} \end{cases},
    \label{eq:probing_vectors}
\end{equation}
with $s_p \sim U(-1,+1)$. The trace estimate is then formed as
\begin{equation}\label{eq:probing_estimator}
    \mathrm{Tr}[D^{-1}] \approx \sum_{m=1}^{n_c} \frac{1}{N_m} \sum_{j=1}^{N_m}(v_j^{(m)})^\dagger D^{-1}\, v_j^{(m)},
\end{equation}
where an independent Hutchinson estimator with $N_m$ samples is applied to each color class $m$.

The variance reduction achieved by this construction follows directly from the coloring property. The variance of the $m$-th term
in~\eqref{eq:probing_estimator} receives contributions only from pairs of nodes within the same color class, so that the total variance is given by:
%\begin{equation}
%\label{eq:probing_variance}
%    \mathbb{V}\!\left[v^\dagger D^{-1}\, v\right]
%    = \frac{1}{2} \sum_{k=1}^{n_c} \sum_{\substack{p \neq q \\ \mathrm{col}(p) = \mathrm{col}(q) = k}}
%    |D^{-1}_{pq} + D^{-1}_{qp}|^2 = \frac{1}{2} \sum_{\substack{p \neq q \\ d_G(p,q) > d}} |D^{-1}_{pq} + D^{-1}_{qp}|^2 ,
%\end{equation}

\begin{equation}
\label{eq:probing_variance}
     \mathbb{V}\!\left[v^\dagger D^{-1}\, v\right]
    = \frac{1}{2} \sum_{m=1}^{n_c} 
    \left\|\mathrm{offdiag}_m(D^{-1} + (D^{-1})^T)\right\|_F^2
     \leq \frac{1}{2} 
    \left\|\mathrm{offdiag}_{>d}(D^{-1} + (D^{-1})^T)\right\|_F^2,
\end{equation}

\noindent
where $\text{offdiag}_m$ denotes the off-diagonal terms of $D^{-1}$ associated with the same color $m$ in the 
corresponding graph, $\text{offdiag}_{>d}$ refers to the off-diagonal entries with a distance from the main diagonal larger 
than $d$. By the distance-$d$ coloring property~\eqref{eq:distance-d-coloring}, any two distinct nodes sharing the same color are at graph distance strictly larger than $d$, so all pairs at distance $\leq d$ contribute zero to the variance. The inequality holds since pairs at distance $> d$ that are assigned different colors by the 
coloring also contribute zero. The variance is therefore bounded above by the sum over all off-diagonal entries at 
graph distance larger than $d$, regardless of their color assignment. This structural reduction of the variance is fundamentally different in nature from what can be achieved by simply increasing the number of samples $N$ in the plain Hutchinson method: no unstructured random sampling can cancel the short-range contributions algebraically. The practical implication is clear: the larger the coloring distance $d$, the smaller the residual variance, but the more colors $n_c$ are needed. The central challenge is therefore to find coloring schemes that achieve large $d$ with as few colors as possible, which is the question we turn to in Section \ref{sec:coloring}.
\section{Coloring of the lattice}\label{sec:coloring}

As established in Section ~\ref{sec:probing}, the effectiveness of stochastic probing relies on the ability to construct distance-$d$ colorings of the lattice graph $G$ with as few colors as possible. On the $n_D$-dimensional hypercubic lattice, $G$ is a regular graph of degree $2n_D$, since each site is connected to exactly two neighbors along each of the $n_D$ lattice directions, one in the positive and one in the negative direction. This lattice has a highly symmetric toroidal structure, and this regularity can be exploited to build colorings analytically rather than resorting to greedy (or other heuristic) graph coloring algorithms. Two such constructions are discussed here, each with distinct properties in terms of the number of colors required and the flexibility with which the coloring distance can be varied. The first, hierarchical probing, is a well-established construction that has been employed in LQCD calculations for over a decade~\cite{Stathopoulos-hp, Stathopoulos-hp2}. The second, the multiplier-based coloring scheme, is a more recent construction that offers greater flexibility in the choice of coloring distance and, as we shall see, proves more beneficial than hierarchical probing in practically relevant computational budget regimes.

\subsection{Hierarchical probing}\label{subsec:hp}
The first very well established construction, known as \emph{hierarchical probing}, introduced in~\cite{Stathopoulos-hp} and extended in ~\cite{Stathopoulos-hp2}, organizes the colorings into a nested sequence of levels $i = 1, 2, 3 \ldots$. On an $n_D$-dimensional torus the number of colors at hierarchical level $i$ is
\begin{equation}
\label{eq:nc-hp}
    n_c^{(\mathrm{HP})}(i) = 2^{n_D(i-1)+1} = 2\,d_i^{n_D},
\end{equation}
where $d_i = 2^{i-1}$ is the coloring distance at level $i$, which makes explicit that the number of colors grows as the $n_D$-th power of the coloring distance.
The $n_c^{(\mathrm{HP})}$ colors are associated with a set of Hadamard vectors $Z^{(i)} = [z_1^{(i)}, \ldots, z_{n_c^{(i)}}^{(i)}]$, with $z_m^{(i)} \in \{-1,+1\}^n$, given by a linear combination of the probing vectors defined in~\eqref{eq:probing_vectors}, such that the following two properties hold:
\begin{enumerate}
    \item [(HP1)] \textbf{Distance property.} The outer product matrix satisfies
    \begin{equation}\label{eq:hadamard_cancel}
    (Z^{(i)}(Z^{(i)})^\dagger)_{pq} 
    = \sum_{m=1}^{n_c^{(i)}} z_m^{(i)}(p)\,\overline{z_m^{(i)}(q)} = 0
    \qquad \text{whenever } d_G(p,q) \leq 2^{i-1},
\end{equation}
where $z_m^{(i)}(p)$ is the $p-$th entry of $z_m^{(i)}$ and $\overline{\,\cdot\,}$ denotes complex conjugation.
    \item [(HP2)] \textbf{Nesting property.} The column space of $Z^{(i-1)}$ is contained in the column space of $Z^{(i)}$, i.e.,\
    $Z^{(i-1)} = Z^{(i)}(:, 1:n_c^{(i-1)})$ up to an orthogonal transformation, so that the quadratures computed at level $i-1$ can be reused at level $i$.
\end{enumerate}

According to the nesting property, the color classes from the preceding level are refined by 
dividing each class into a number of subclasses that depends on $n_D$, so that the partition at 
level $i$ is a refinement of that at level $i-1$. This is the main computational advantage of the 
hierarchical construction: the linear solves performed at level $i-1$ can be reused when moving 
to level $i$, reducing the incremental cost when increasing the probing distance.
The trace estimator at level $i$ is
\begin{equation}\label{eq:hp_estimator}
    \widehat{\mathrm{Tr}}^{(\mathrm{HP},i)}(D^{-1})
    = \sum_{m=1}^{n_c^{(i)}} \frac{1}{N_m} \sum_{j = 1}^{N_m} (z_{m,j}^{(i)})^\dagger D^{-1}\, z_{m,j}^{(i)},
\end{equation}
and by (HP1) its variance is
%\begin{equation}\label{eq:hp_var}
%    \mathbb{V}^{(\mathrm{HP},i)}
%    = \frac{1}{2}\sum_{p \neq q} |D^{-1}_{pq}+D^{-1}_{qp}|^2
%    \cdot |(Z^{(i)}(Z^{(i)})^\dagger)_{pq}|^2
%    = \frac{1}{2}\sum_{\substack{p \neq q \\ d_G(p,q) > 2^{i-1}}}
 %   |D^{-1}_{pq}+D^{-1}_{qp}|^2 \cdot |(Z^{(i)}(Z^{(i)})^\dagger)_{pq}|^2,
%\end{equation}
\begin{equation}\label{eq:hp_var}
    \mathbb{V}^{(\mathrm{HP},i)}
    \leq \frac{1}{2}
    \left\|\mathrm{offdiag}_{>2^{i-1}}(D^{-1} + (D^{-1})^T)\right\|_F^2,
\end{equation}
The vectors $z_{m,j}$ in~\eqref{eq:hp_estimator} are obtained by modulating the Hadamard vectors $z_m$ with independent Rademacher noise vectors $z_j \in \{-1,+1\}^n$, i.e.\ $z_{m,j} = z_j 
\odot z_m$, where $\odot$ denotes the elementwise Hadamard product. This construction makes the Hadamard vectors stochastic while preserving the algebraic structure of the outer product $Z^{(i)}(Z^{(i)})^\dagger$, so that the cancellation guaranteed by property~\eqref{eq:hadamard_cancel} remains intact and ensures that the estimator remains unbiased \cite{Stathopoulos-hp}.
Due to the exponential growth of \eqref{eq:nc-hp} with the hierarchical level $i$, a practitioner operating under a fixed computational budget may be unable to afford the full set of Hadamard vectors at level $i$. A natural remedy is to exploit the nesting property~(HP2) and use only a subset of $s$ vectors, with $n_c^{(i-1)} \leq s < n_c^{(i)}$.

\begin{definition}[Intermediate hierarchical probing] 
Let $i \geq 1$ be a hierarchical level and let $s \in \mathbb{N}$ be an integer satisfying
\[
    n_c^{(i-1)} = 2^{n_D(i-2)+1} \;<\; s \;<\; 2^{n_D(i-1)+1} = n_c^{(i)}.
\]
The \emph{intermediate hierarchical probing estimator} at level $i$ with $s$ colors is defined as
\begin{equation}
    \widehat{\mathrm{Tr}}^{(\mathrm{HP},i,s)}(D^{-1})
    = \sum_{m=1}^{s} \frac{1}{N_m} \sum_{j = 1}^{N_m} (z_{m,j}^{(i)})^\dagger D^{-1}\, z_{m,j}^{(i)}
\end{equation}

where $Z^{(i)}_s = Z^{(i)}(:,1:s)$ denotes the first $s$ columns of the level-$i$ Hadamard matrix $Z^{(i)}$. By the nesting property~\textnormal{(HP2)}, the $s$ vectors contain all $n_c^{(i-1)}$ vectors of level $i-1$ as a subset, so that all quadratures computed at level $i-1$ are reused. 
Let us define the incremental outer product
\begin{equation}\label{eq:delta}
    \Delta_s = Z^{(i)}_s(Z^{(i)}_s)^\dagger - Z^{(i-1)}(Z^{(i-1)})^\dagger
    = \sum_{m=n_c^{(i-1)}+1}^{s} z_m^{(i)}(z_m^{(i)})^\dagger.
\end{equation}
The variance of the partial estimator is then
%\begin{equation}\label{eq:hp_partial_var}
%    \mathbb{V}^{(\mathrm{HP})}(s)
%    = \frac{1}{2}\sum_{p\neq q} |D^{-1}_{pq}+D^{-1}_{qp}|^2
%    \cdot |(Z^{(i-1)}(Z^{(i-1)})^\dagger + \Delta_s)_{pq}|^2.
%\end{equation}
\begin{equation}\label{eq:hp_partial_var}
\begin{aligned}
    \mathbb{V}^{(\mathrm{HP})}(s)
    &= \frac{1}{2}\left\|\mathrm{offdiag}
    \!\left((D^{-1} + (D^{-1})^T) 
    \circ (Z^{(i-1)}(Z^{(i-1)})^T + \Delta_s)\right)\right\|_F^2 \\
    &= \frac{1}{2}\sum_{p\neq q} |D^{-1}_{pq}+D^{-1}_{qp}|^2
    \cdot |(Z^{(i-1)}(Z^{(i-1)})^\dagger + \Delta_s)_{pq}|^2.
\end{aligned}
\end{equation}
\end{definition}

\noindent
It is important to note that the vectors $z_m^{(i)}$ are \emph{not} canonical stochastic probing vectors: unlike the probing vectors~\eqref{eq:probing_vectors}, which have entries in $\{-1, 0, +1\}$ with zeros outside a given color class, the Hadamard vectors $z_m^{(i)} \in \{-1,+1\}^n$ are dense and have no zero entries. 
The variance cancellation for hierarchical probing therefore does not arise from the structural zeros of individual vectors, but from algebraic cancellation in the collective outer product $Z^{(i)}(Z^{(i)})^\dagger$. Specifically, for a pair $(p,q)$ with $d_G(p,q) \leq 2^{i-1}$, each individual term $z_{m,p}^{(i)}\overline{z_{m,q}^{(i)}} = \pm 1$ is nonzero, but the sum over all $n_c^{(i)}$ vectors cancels to zero (see HP1), due to the specific permutation of Hadamard columns chosen by the hierarchical construction~\cite{Stathopoulos-hp}. This cancellation is a collective algebraic property of the complete set of $n_c^{(i)}$ vectors, and it fails for any strict subset of cardinality greater than $n_c^{(i-1)}$. While the completed level $i-1$ retains its own cancellation guarantee up to distance $2^{i-2}$, any intermediate partial set with $n_c^{(i-1)} < s < n_c^{(i)}$ vectors provides no such guarantee beyond distance $2^{i-2}$. Indeed, the coloring distance achieved by such a partial set remains 
$d_{i-1} = 2^{i-2}$: adding the first vectors of level $i$ beyond the completed level $i-1$ assigns different colors to nodes at distance greater than $2^{i-2}$, but not in a way that constitutes a valid distance-$d$ coloring for any 
$d_{i-1} < d < d_i$. The coloring distance therefore stays at $2^{i-2}$ throughout the entire intermediate regime $n_c^{(i-1)} < s < n_c^{(i)}$, and only jumps to $2^{i-1}$ when the full set of $n_c^{(i)}$ vectors is reached.
This is the key structural difference between the Hadamard-based hierarchical estimator and a canonical probing estimator: for the latter, variance cancellation at distance $d$ is guaranteed entry-by-entry by the zeros of the probing vectors, while for the former it relies on the global algebraic structure of the full Hadamard outer product. As a consequence, as $s$ increases from $n_c^{(i-1)}$ to $n_c^{(i)}$, the variance $\mathbb{V}^{(\mathrm{HP})}(s)$ does not decrease monotonically. 
Adding the first vectors of level $i$ beyond the completed level $i-1$ introduces new nonzero off-diagonal contributions to the outer product $\Delta_s$ without yet achieving any additional cancellation, causing the variance to initially rise above that of the completed level $i-1$. Only as $s$ approaches $n_c^{(i)}$ the collective algebraic cancellation begins to take effect, driving the variance back down. At some intermediate value of $s$, the variance crosses below that of the completed level $i-1$, and it is in this regime that the use of intermediate hierarchical probing becomes meaningful in practice.

\subsection{Multiplier-based coloring}\label{subsec:multiplierbased}

The more flexible \emph{multiplier-based} coloring scheme assigns a color to a lattice site $x$ through:
\begin{equation}\label{eq:multiplier_coloring}
    \mathrm{col}(x) = \left(\sum_{i=1}^{n_D} \sigma_i x_i\right) \bmod n_c + 1,
\end{equation}
where $x = (x_1, \ldots, x_{n_D}) \in \mathbb{Z}^{n_D}$ are the integer coordinates of a lattice site and 
$(\sigma_1, \ldots, \sigma_{n_D}) \in \mathbb{N}$ are integer coefficients that depend on the lattice 
dimensions and on the coloring distance $d$. The additive shift of $1$ ensures that colors are indexed 
from $1$ to $n_c$, consistently with the notation used throughout this work. The parameters $\sigma_i$ in a multiplier-based coloring can be determined a priori for each lattice geometry. Computational methods based on efficient variants of an exhaustive search approach are described in~\cite{Langetal:2026}.

The multiplier-based trace estimator uses the probing vectors
$v^{(m)}$ defined in~\eqref{eq:probing_vectors}, and its variance is
%\begin{equation}\label{eq:multiplier_var}
%    \mathbb{V}^{(\sigma)}(n_c)
%    = \frac{1}{2}\sum_{\substack{p \neq q \\ d_G(p,q) > d(n_c)}}
%    |D^{-1}_{pq}+D^{-1}_{qp}|^2 \cdot w_{pq},
%\end{equation}
\begin{equation}\label{eq:multiplier_var}
    \mathbb{V}^{(\sigma)}(n_c) = 
    \frac{1}{2}\sum_{\substack{p \neq q \\ d_G(p,q) > d(n_c)}}
   |D^{-1}_{pq}+D^{-1}_{qp}|^2 \cdot w_{pq},
\end{equation}
where $w_{pq}$ is defined as:
\begin{equation}
    \label{eq:wpq}
    \begin{cases}
        w_{pq} = 1 \quad \text{if col}(p) = \text{col}(q)\\
        w_{pq} = 0 \quad \text{otherwise}
    \end{cases}.
\end{equation}

\begin{table}[h!]
\centering
\begin{tabular}{ccc}
\hline
$d$ & $n_c^{(\sigma)}$ & $n_c^{(HP)}$\\ \hline
1 & 2 & 2\\
2 & 10 & 32\\
3 & 16 & -\\
4 & 64 & 512\\
5 & 128 & -\\
6 & 320 & -\\
7 & 512 & -\\ \hline
\end{tabular}
\caption{Number of colors $n_c$ required for distance-$d$ colorings using the multiplier-based scheme ($n_c^{\sigma}$) and hierarchical probing ($n_c^{\text{HP}}$) on a $64 \times 32^3$ torus. A dash indicates that the distance is not achievable within the hierarchical
framework at that level.} 
\label{tab:colors}
\end{table}

The multiplier-based coloring is not restricted to power-of-2 coloring distances and requires far fewer colors than hierarchical probing at the same 
coloring distance, as shown in Table~\ref{tab:colors} for a $64 \times 32^3$ lattice with $n_D = 4$. It thus provides access to a 
much richer set of intermediate distances. At distance $d = 3$, for example, the multiplier-based scheme requires $16$ colors, a distance that is not 
achievable at all within the hierarchical framework without jumping to the next level at $d = 4$ with $512$ colors. Section~\ref{subsec:comparison} 
further demonstrates that, for any intermediate budget lying between two successive hierarchical levels, we have a theoretical result motivating that the multiplier-based estimator can be expected to achieve a lower variance than the partial hierarchical estimator (Lemma~\ref{lem:multiplier_wins}). A 
detailed numerical comparison of the application of the two schemes to the estimation of observables in LQCD at fixed computational budget is presented in Section~\ref{sec:results}.

\subsection{Theoretical analysis of variance reduction}
\label{subsec:comparison}

Let $D \in \mathbb{C}^{n \times n}$ be a large and sparse matrix and let the entries $D_{ij}^{-1}$ of its inverse obey eq. \eqref{eq:offdiag-decay}. Let $G = (V, E)$ be the undirected graph associated with $D$ as defined in section \ref{sec:probing} represented by an $n_D$-dimensional toroidal lattice with nearest neighbor coupling, and $d_G(i,j)$ the graph distance between nodes $i$ and $j$ in $G$. Given the two probing strategies described in section \ref{sec:coloring}, we   prove the following result:

\begin{lemma}%[multiplier-based achieves strictly lower variance than intermediate hierarchical probing at equal color budget]
\label{lem:multiplier_wins}
Let $n_D \geq 1$ and let $i \geq 1$ be a hierarchical level. Suppose $s=n_c$ is a number of colors satisfying
\begin{equation}\label{eq:intermediate}
    n_c^{(i-1)} = 2^{n_D(i-2)+1} < n_c < 2^{n_D(i-1)+1} = n_c^{(i)},
\end{equation}
and suppose that the multiplier-based coloring with $n_c$ colors achieves a coloring distance $d = d(n_c)$ satisfying 
\begin{equation}
\label{eq:d_condition}
    d(n_c) \geq 2^{i-2} + 1,
\end{equation}
that is, at least by 1 larger than the distance guaranteed by the completed hierarchical level $i-1$.
Then
\begin{equation}\label{eq:multiplier_wins}
    \mathbb{V}^{(\sigma)}(n_c) \leq \mathbb{V}^{(\mathrm{HP})}(n_c) \quad \textit{if the distance d is large enough},
\end{equation}
with strict inequality whenever there exist pairs $(p,q)$ with $2^{i-2} < d_G(p,q) \leq d(n_c)$ that are assigned the same color by the partial hierarchical coloring.
\end{lemma}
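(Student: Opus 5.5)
We compare the two variance expressions \eqref{eq:multiplier_var} and \eqref{eq:hp_partial_var} term by term over pairs $(p,q)$. To this end we write $a_{pq} = |D^{-1}_{pq}+D^{-1}_{qp}|^2$ and $W^{\mathrm{HP}}_{pq} = |(Z^{(i-1)}(Z^{(i-1)})^\dagger + \Delta_{n_c})_{pq}|^2$. Both variances are then weighted sums $\tfrac12\sum_{p\neq q} a_{pq}\,\omega_{pq}$ of the same nonnegative coefficients $a_{pq}$, with weights $\omega_{pq}=w_{pq}\in\{0,1\}$ for the multiplier scheme and $\omega_{pq}=W^{\mathrm{HP}}_{pq}\ge 0$ for the partial hierarchical scheme. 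We split the pair set into three shells: $S_1=\{d_G(p,q)\le 2^{i-2}\}$, $S_2=\{2^{i-2}<d_G(p,q)\le d(n_c)\}$ and $S_3=\{d_G(p,q)>d(n_c)\}$.

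\emph{Shell $S_1$.} Here the multiplier weight vanishes by \eqref{eq:d_condition} and the distance-$d$ property \eqref{eq:distance-d-coloring}. For the hierarchical weight, (HP1) at level $i-1$ gives $(Z^{(i-1)}(Z^{(i-1)})^\dagger)_{pq}=0$, so $W^{\mathrm{HP}}_{pq}=|(\Delta_{n_c})_{pq}|^2\ge 0$. The contribution of $S_1$ to $\mathbb{V}^{(\sigma)}$ is therefore zero and its contribution to $\mathbb{V}^{(\mathrm{HP})}$ is nonnegative.

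\emph{Shell $S_2$.} The multiplier weight again vanishes, since $d_G(p,q)\le d(n_c)$ and the coloring has distance $d(n_c)$. The hierarchical weight is nonnegative, and it is strictly positive for any pair that the partial hierarchical coloring places in the same class. For such a pair the $\pm1$ products in the outer product do not cancel, because the partial set gives no cancellation beyond $2^{i-2}$. This shell produces the strict gap $G_2:=\tfrac12\sum_{S_2}a_{pq}W^{\mathrm{HP}}_{pq}>0$ under the hypothesis stated for strictness, provided $a_{pq}\neq0$ for at least one such pair; that holds generically under \eqref{eq:offdiag-decay}.

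\emph{Shell $S_3$.} This is the main obstacle, because here neither weight dominates the other pointwise. We have $w_{pq}\in\{0,1\}$, while $W^{\mathrm{HP}}_{pq}$ may be small or zero through cancellation. The plan is to bound the multiplier tail $T_3:=\tfrac12\sum_{S_3}a_{pq}w_{pq}\le \tfrac12\|\mathrm{offdiag}_{>d(n_c)}(D^{-1}+(D^{-1})^T)\|_F^2$. On the torus the number of pairs at distance $r$ from a fixed site grows only polynomially, as $O(r^{n_D-1})$, while \eqref{eq:offdiag-decay} gives $a_{pq}\sim e^{-2m_\pi r}$. Hence $T_3 \lesssim n\sum_{r>d}r^{n_D-1}e^{-2m_\pi r}=O\bigl(n\,d^{n_D-1}e^{-2m_\pi d}\bigr)$.

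The gap, by contrast, is at least the contribution from distance $r=2^{i-2}+1$, namely $G_2\gtrsim c\,n\,e^{-2m_\pi(2^{i-2}+1)}$. The constant $c>0$ here is the density of same-colored hierarchical pairs at that distance, which is positive by the translation-invariant structure of the Hadamard construction. It then suffices to show $T_3\le G_2$. This holds as soon as $d(n_c)-2^{i-2}$ is large enough that the factor $d^{n_D-1}e^{-2m_\pi(d-2^{i-2}-1)}$ drops below $c$. This is exactly where the proviso ``if the distance $d$ is large enough'' in \eqref{eq:multiplier_wins} enters, and where we must treat \eqref{eq:offdiag-decay} as a two-sided estimate with uniform constants.

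Combining the three shells gives $\mathbb{V}^{(\mathrm{HP})}(n_c)-\mathbb{V}^{(\sigma)}(n_c)\ge G_2 - T_3 + (\text{nonnegative } S_1\text{ and } S_3\text{ HP terms})\ge0$, with strict inequality when $G_2>T_3$. The step I expect to need the most care is making the lower bound on $G_2$ rigorous, i.e.\ showing that a positive fraction of pairs at distance just above $2^{i-2}$ keep nonzero hierarchical weight. If this cannot be shown in general, I would instead state it as the strictness hypothesis of the lemma and quantify it there.
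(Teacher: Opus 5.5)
Your argument has the same skeleton as the paper's proof: the split into $d_G(p,q)\le 2^{i-2}$, $2^{i-2}<d_G(p,q)\le d(n_c)$ and $d_G(p,q)>d(n_c)$, with the gain coming from the middle range. Where you deviate, your version is the sounder one.

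In Range~1 the paper asserts $(\Delta_{n_c})_{pq}=0$ for $d_G(p,q)\le 2^{i-2}$, see \eqref{eq:range1_corrected}, arguing that the added vectors are ``supported'' on far-apart nodes. This is false, because the $z_m^{(i)}$ are dense $\pm1$ vectors. So $(\Delta_{n_c})_{pq}$ is a sum of $n_c-n_c^{(i-1)}$ terms $\pm1$, and since $n_c^{(i-1)}$ is even, it is odd, hence nonzero, for \emph{every} pair whenever $n_c$ is odd. You use only $W^{\mathrm{HP}}_{pq}\ge 0$ there, which is all the inequality needs.

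In Range~3 the paper simply declares the difference $\simeq 0$. You instead drop the nonnegative hierarchical tail and bound the multiplier tail $T_3$. This turns ``$d$ large enough'' into the checkable condition $G_2\ge T_3$.

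The same parity remark sharpens your $S_3$ discussion. With the weights exactly as in \eqref{eq:hp_partial_var}, $W^{\mathrm{HP}}_{pq}$ is the square of an integer, hence either $0$ or at least $1\ge w^{(\sigma)}_{pq}$, never merely ``small''. For odd $n_c$ it is never $0$, and the inequality holds pair by pair. This shortcut is, however, an artifact of the missing normalization $1/n_c$ in the hierarchical estimators: each of the $n_c$ terms is already unbiased for the trace. With that factor restored, the nonzero weights are only bounded below by $1/n_c^2$.

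Outside that shortcut, the step you flag is the real gap, and it needs more than you state.
\begin{itemize}
\item The lower bound must be assumed for $a_{pq}=|D^{-1}_{pq}+D^{-1}_{qp}|^2$ itself. A two-sided version of \eqref{eq:offdiag-decay} for $|D^{-1}_{pq}|$ does not exclude cancellation in the symmetrized entry.
\item Translation invariance only propagates a pair at distance $2^{i-2}+1$ with nonzero weight; it does not produce one.
\item $c$ is not a constant. The partial weights are periodic with period of order $2^{i}$ per direction, so periodicity alone bounds $c$ below only by something of order $2^{-n_D i}$.
\item $d(n_c)$ cannot be made large at fixed $i$. A distance-$d$ coloring needs at least as many colors as an $\ell_1$-ball of radius $\lfloor d/2\rfloor$ has sites, so $n_c<n_c^{(i)}$ forces $d(n_c)=O(2^{i})$. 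Thus ``large enough'' must mean $i\to\infty$ with $d(n_c)-2^{i-2}$ growing fast enough to beat the decay of $c$, and all decay constants uniform in $i$.
\item Your non-strict inequality now rests on $G_2\ge T_3$. It is therefore unavailable in particular when the strictness condition fails, whereas the lemma asserts it independently of that condition.
\end{itemize}
Building $G_2\ge T_3$, or a density assumption, into the hypotheses, as you suggest, is the honest formulation. The paper's proof has the same unproven step: it calls the Range~2 term ``strictly positive in general'' and discards Range~3. So your argument is not weaker than the reference.
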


\begin{proof}
We decompose the variance of each estimator according to the graph distance of the contributing pairs. For any probing estimator using $n_c$ colors, the variance takes the form
\begin{equation}\label{eq:var_decomp}
    \mathbb{V}(n_c) = \frac{1}{2}\sum_{\ell=1}^{\mathrm{diam}(G)}
    \sum_{\substack{p \neq q \\ d_G(p,q)=\ell}}
    |D^{-1}_{pq}+D^{-1}_{qp}|^2 \cdot w_{pq},
\end{equation}
where $w_{pq} \geq 0$ is a weight that depends on the coloring or probing construction used:
\begin{itemize}
    \item For the multiplier-based estimator, $w^{(\sigma)}_{pq}$ is defined by \eqref{eq:wpq} and is equal to zero for all $\ell \leq d(n_c)$ by the distance-$d(n_c)$ coloring
    property.
    \item For the partial hierarchical estimator, $w_{pq}^{(\mathrm{HP})} =
    |(Z^{(i-1)}(Z^{(i-1)})^\dagger + \Delta_{n_c})_{pq}|^2$. By property (HP1)
    applied to level $i-1$, we have $(Z^{(i-1)}(Z^{(i-1)})^\dagger)_{pq} = 0$
    for $\ell \leq 2^{i-2}$, so for these pairs
    $w_{pq}^{(\mathrm{HP})} = |(\Delta_{n_c})_{pq}|^2$.
\end{itemize}

We now compare the two variances by splitting the sum~\eqref{eq:var_decomp} into three distance ranges.

\medskip
\noindent\textbf{Range 1: $d_G(p,q) \leq 2^{i-2}$.}
For the multiplier-based estimator, all pairs in this range satisfy $d_G(p,q) \leq 2^{i-2} < d(n_c)$ by assumption~\eqref{eq:d_condition}, so $w_{pq}^{(\sigma)} = 0$.

For the partial hierarchical estimator, the completed level $i-1$ guarantees $(Z^{(i-1)}(Z^{(i-1)})^\dagger)_{pq} = 0$ for all pairs at distance $\leq 2^{i-2}$ by~\eqref{eq:hadamard_cancel}. Moreover, since $n_c > n_c^{(i-1)}$, the partial hierarchical estimator includes the full set of level-$(i-1)$ vectors.
We now claim that the cancellation at distance $\leq 2^{i-2}$ is preserved by the partial incremental contribution $\Delta_{n_c}$. To see this, note that the level-$(i-1)$ vectors already eliminate all pairs at 
distance $\leq 2^{i-2}$, and the additional vectors $z_m^{(i)}$ for $m = n_c^{(i-1)}+1, \ldots, n_c$ belong to the refinement of level $i-1$. By the hierarchical construction~\cite{Stathopoulos-hp}, each such vector 
is supported on a subset of nodes that are at distance $> 2^{i-2}$ from each other within the same color class of level $i-1$. Therefore $(\Delta_{n_c})_{pq} = 0$ for all pairs with $d_G(p,q) \leq 2^{i-2}$, and
\begin{equation}\label{eq:range1_corrected}
    (Z^{(i-1)}(Z^{(i-1)})^\dagger + \Delta_{n_c})_{pq}
    = 0
    \qquad \text{for } d_G(p,q) \leq 2^{i-2}.
\end{equation}
Thus the contribution of Range 1 to both variances is exactly zero:
\begin{equation}\label{eq:range1_zero}
    \frac{1}{2}\sum_{\substack{p\neq q \\ d_G(p,q) \leq 2^{i-2}}}
    |D^{-1}_{pq}+D^{-1}_{qp}|^2 \cdot w_{pq}^{(\sigma)}
    = 0
    = \frac{1}{2}\sum_{\substack{p\neq q \\ d_G(p,q) \leq 2^{i-2}}}
    |D^{-1}_{pq}+D^{-1}_{qp}|^2 \cdot w_{pq}^{(\mathrm{HP})}.
\end{equation}
Range 1 therefore contributes no difference between the two variances.

\medskip
\noindent\textbf{Range 2: $2^{i-2} < d_G(p,q) \leq d(n_c)$.}
For the multiplier-based estimator, $w_{pq}^{(\sigma)} = 0$ for all pairs in this range, since the distance-$d(n_c)$ coloring ensures all pairs at distance $\leq d(n_c)$ are in different color classes. 

For the partial hierarchical estimator, the vectors $z_m^{(i)}$ for $m > n_c^{(i-1)}$ are Hadamard vectors with entries from $\{-1,+1\}$. In particular, since the completed level $i-1$ only guarantees cancellation up to distance $2^{i-2}$ via (HP1), the incremental matrix $\Delta_{n_c}$ has no structural reason to vanish for pairs in this distance range. In general, for pairs $(p,q)$ with $2^{i-2} < d_G(p,q) \leq d(n_c)$, we have $|(\Delta_{n_c})_{pq}|^2 > 0$, and therefore $w_{pq}^{(\mathrm{HP})} = |(\Delta_{n_c})_{pq}|^2 > 0$. The contribution of Range~2 satisfies
\begin{equation}\label{eq:range2}
    \frac{1}{2}\sum_{\substack{p\neq q \\ 2^{i-2} < d_G(p,q) \leq d(n_c)}}
    |D^{-1}_{pq}+D^{-1}_{qp}|^2 \cdot w_{pq}^{(\sigma)}
    = 0
    \leq \frac{1}{2}\sum_{\substack{p\neq q \\ 2^{i-2} < d_G(p,q) \leq d(n_c)}}
    |D^{-1}_{pq}+D^{-1}_{qp}|^2 \cdot w_{pq}^{(\mathrm{HP})},
\end{equation}
with strict inequality whenever $|(\Delta_{n_c})_{pq}| > 0$ for at least one such pair, which holds in general for $n_c > n_c^{(i-1)}$.

\medskip
\noindent\textbf{Range 3: $d_G(p,q) > d(n_c)$.}
In this range both estimators receive contributions from pairs at large graph distance. For the multiplier-based estimator, $w_{pq}^{(\sigma)} \in \{0,1\}$ 
depending on whether same-color pairs exist at these distances. For the hierarchical estimator, $w_{pq}^{(\mathrm{HP})} = |(Z^{(i-1)}(Z^{(i-1)})^\dagger + \Delta_{n_c})_{pq}|^2$. Since both estimators use $n_c$ vectors and neither provides structural cancellation 
beyond distance $d(n_c)$ (for the multiplier-based scheme) or $2^{i-1}$ (for the hierarchical scheme, once complete), the contributions of both estimators in this range are of comparable magnitude. In particular, by the exponential decay~\eqref{eq:offdiag-decay}, the terms $|D^{-1}_{pq}+D^{-1}_{qp}|^2$ for $d_G(p,q) > d(n_c)$ are exponentially small compared to those in Range 2, so this range contributes a negligible correction to the difference $\mathbb{V}^{(\mathrm{HP})}(n_c) - \mathbb{V}^{(\sigma)}(n_c)$.

\medskip
\noindent\textbf{Conclusion.}
Combining the three ranges,
\begin{align}\label{eq:final_comparison}
    \mathbb{V}^{(\mathrm{HP})}(n_c) - \mathbb{V}^{(\sigma)}(n_c)
    &= \underbrace{\frac{1}{2}\sum_{\substack{p\neq q \\
    d_G(p,q) \leq 2^{i-2}}}
    |D^{-1}_{pq}+D^{-1}_{qp}|^2 \cdot (w_{pq}^{(\mathrm{HP})} -
    w_{pq}^{(\sigma)})}_{= 0 \text{ by~\eqref{eq:range1_zero}}}
    \nonumber\\
    &\quad +\underbrace{\frac{1}{2}\sum_{\substack{p\neq q \\
    2^{i-2} < d_G(p,q) \leq d(n_c)}}
    |D^{-1}_{pq}+D^{-1}_{qp}|^2 \cdot w_{pq}^{(\mathrm{HP})}}_{
    \geq\, 0,\ \text{strictly positive in general}}
    \nonumber\\
    &\quad + \underbrace{\frac{1}{2}\sum_{\substack{p\neq q \\
    d_G(p,q) > d(n_c)}}
    |D^{-1}_{pq}+D^{-1}_{qp}|^2 \cdot (w_{pq}^{(\mathrm{HP})} - w_{pq}^{(\sigma)})}_{\simeq 0} \geq 0,
\end{align}
which proves~\eqref{eq:multiplier_wins}. Strict inequality holds because Range 2 contributes a strictly positive term: for pairs $(p,q)$ at graph distance $2^{i-2} < d_G(p,q) \leq d(n_c)$, the multiplier-based estimator achieves exact cancellation while the partial hierarchical estimator does not, leaving unremoved variance contributions that are, by the exponential decay~\eqref{eq:offdiag-decay}, larger than those in Range 3 by a factor of at least $e^{m_\pi(d(n_c) - d_G(p,q))}$. The dominant gain of the multiplier-based scheme over hierarchical probing is therefore precisely the variance associated with graph distances in the interval $(2^{i-2}, d(n_c)]$, which multiplier-based cancels exactly and hierarchical probing leaves uncontrolled.
\end{proof}
We note that, as discussed at the end of Section~\ref{subsec:hp}, for values of $n_c$ sufficiently close to $n_c^{(i)}$, where the collective algebraic cancellation of the full Hadamard set begins to take effect, $\mathbb{V}^{(\mathrm{HP})}(n_c)$ may approach $\mathbb{V}^{(\mathrm{HP})}(n_c^{(i)})$ from above, so that the gap between $\mathbb{V}^{(\sigma)}(n_c)$ and $\mathbb{V}^{(\mathrm{HP})}(n_c)$ guaranteed by~\eqref{eq:multiplier_wins} may become small. The bound~\eqref{eq:multiplier_wins} remains valid throughout the entire intermediate range $n_c^{(i-1)} < n_c < n_c^{(i)}$, but its practical significance is largest for values of $n_c$ away from $n_c^{(i)}$, i.e.\ on the increasing side of the hierarchical variance trend described in Section~\ref{subsec:hp}.
A numerical verification of Lemma~\ref{lem:multiplier_wins} on a realistic LQCD configuration is presented in Section~\ref{sec:results}, where the variance reduction achieved by the multiplier-based and hierarchical probing estimators is compared at fixed computational budget across the full range of intermediate colorings.
\subsection{Dilution}\label{subsec:dilution}

So far we have treated the graph nodes as scalar degrees of freedom, assigning a single $\pm 1$ entry to each node in the construction of the probing vectors~\eqref{eq:probing_vectors}. In LQCD, however, our graph is a torus where lattice site $x$ carries $N_s N_c = 12$ internal degrees of freedom, corresponding to $N_s = 4$ spin components and $N_c = 3$ color charges, 
as described in Section~\ref{sec:intro}. As a consequence, the probing vectors $v^{(m)} \in \mathbb{C}^n$ live in a space of dimension $n = 12|V|$, and the $12$ internal degrees of freedom at each site are in general all simultaneously
nonzero in a probing vector constructed from a coloring of the site graph alone.

The coupling between internal degrees of freedom introduces additional off-diagonal contributions to the variance~\eqref{eq:probing_variance} that the graph coloring does not address. These contributions arise because the internal degrees of freedom at a single site are not separated by any graph distance, so the distance-$d$ coloring alone cannot cancel their contribution to the variance. 

\emph{Dilution}~\cite{Foley-dilution} addresses this by further splitting each probing vector $v^{(m)}$ into a set of vectors that have support on only a prescribed subset of the internal degrees of freedom at each site. Formally, let $\mathcal{P} = \{P_1, \ldots, P_{n_d}\}$ be a partition of the index set $\{1, \ldots, 12\}$ of the internal degrees of freedom into $n_d$ disjoint subsets.
The diluted probing vectors associated with color class $m$ and dilution subset $\ell$ are defined by
\begin{equation}\label{eq:dilution_vectors}
    v^{(m,\ell)}_p =
    \begin{cases}
        s_p & \text{if } \mathrm{col}(\lfloor p/12 \rfloor) = m
               \text{ and } (p \bmod 12)+1 \in P_\ell, \\
        0   & \text{otherwise,}
    \end{cases}
\end{equation}
where $s_p \sim U(-1,1)$ i.i.d., and the trace estimator becomes
\begin{equation}\label{eq:diluted_estimator}
    \mathrm{Tr}[A] \approx \sum_{m=1}^{n_c} \sum_{\ell=1}^{n_d} \frac{1}{N_{m}} \sum_{j=1}^{N_{m}}
    (v_j^{(m,\ell)})^\dagger D^{-1}\, v_j^{(m,\ell)}.
\end{equation}
%increasing the number of linear system solves by a factor of $n_d$.
%Since the estimator~\eqref{eq:diluted_estimator} now contains $n_d \cdot n_c$ terms, each requiring one linear solve, a fixed computational budget that previously afforded $N$ samples per color class now affords $N/n_d$ samples per color-dilution pair; the increased number of terms however targets additional sources of variance, so that the overall variance is reduced despite the smaller sample count per term.

The choice of partition $\mathcal{P}$ defines the \emph{dilution scheme}. Three natural schemes are in common use. In \emph{spin dilution}, the $N_s = 4$ spin components are separated while the $N_c = 3$ color components---these are inner quark degrees of freedom not to be confused with the colors used in probing---are left unseparated. This results in  $n_d = N_s = 4$ dilution vectors per color class. In \emph{color dilution},
the roles are reversed: the $N_c = 3$ color components are separated while the spin components are left together, giving $n_d = N_c = 3$. In \emph{full dilution}, all $N_s N_c = 12$ internal degrees of freedom are treated independently, so that each diluted vector has support on exactly one spin-color component at each lattice site of the corresponding color class, yielding $n_d = 12$. Full dilution maximally separates the internal degrees of freedom and therefore achieves the largest variance reduction among the three schemes, 
at the cost of introducing $n_d = 12$ terms per color class in ~\eqref{eq:diluted_estimator}; for a fixed computational budget this means fewer stochastic samples $N_m$ per term, but the reduction in variance per term typically compensates for this.
Partial dilution schemes such as spin or color dilution offer intermediate trade-offs between variance reduction and computational cost.

The interplay between stochastic probing and dilution is additive in the sense that they target distinct sources of variance: stochastic probing eliminates the off-diagonal contributions from pairs of sites at short graph distance, while dilution eliminates the contributions from the coupling between internal degrees of freedom at the same or nearby sites. The two techniques are therefore complementary and can be combined
freely. An optimal choice of coloring distance $d$ and dilution scheme depends on the relative magnitudes of the spatial and internal contributions to the variance, which in turn depend on the conditioning of $D$ and the lattice geometry.

\section{Numerical Results} \label{sec:results}

We now compare the two coloring strategies, i.e.\ hierarchical probing and multiplier-based probing, to plain Hutchinson for the estimation of $\mathrm{Tr}[D^{-1}]$ on a realistic LQCD gauge configuration. 
We measure the relative accuracy $\epsilon^2$ as a function of the number of colors $n_c$ at a fixed computational budget, measured as the total number of linear system solves, on a $64 \times 32^3$ lattice with $m_{\pi} = am_{\pi}^{\text{phys}} \simeq 0.3322$.
The Wilson-Dirac operator $D$ thus has matrix dimension $n = 64 \times 32^3 \times 12 \approx 25 \times 10^6$, and we apply the full dilution scheme as described in Section~\ref{subsec:dilution}. The linear solves for the samples~\eqref{eq:diluted_estimator} 
were obtained using the DD$\alpha$-AMG multigrid solver~\cite{DDalphaAMG-paper}. The accuracy $\epsilon^2$ is computed as the empirical variance of the estimator \eqref{eq:diluted_estimator} given as
\begin{equation}
    \epsilon^2 = \sum_{m=1}^{n_c} \frac{1}{N_m} \sum_{l=1}^{12} \widehat{\mathbb{V}}_{m,l},
\end{equation}
where $\widehat{\mathbb{V}}_{m,l}$ represents the sample variance associated with the $l$-th degree of freedom for the $m$-th color component, estimated from the $N_m$ stochastic samples as
\begin{equation}
    \widehat{\mathbb{V}}_{m,l} = \frac{1}{N_m-1} \sum_{k=1}^{N_m} 
    \left( e_{m,l}^{(k)} - \bar{e}_{m,l} \right)^2, 
    \qquad \bar{e}_m = \frac{1}{N_m} \sum_{k=1}^{N_m} e_{m,l}^{(k)},
\end{equation}
where $e_{m,l}^{(k)} = (z_{m,l}^{(k)})^\dagger D^{-1} z_{m,l}^{(k)}$ denotes the $k$-th sample of the estimator corresponding to the $l$-th degree of freedom of the $m$-th color. The quantity $\epsilon^2$ is therefore itself a random variable; the values reported in Figure~\ref{fig:trace} are single-run estimates 
obtained from a fixed set of samples and should be interpreted as empirical approximations to the ``true'' variance of the estimator 
 $\sum_{m=1}^{n_c} \frac{1}{N_m} \sum_{l=1}^{12} \mathbb{V}_{m,l}$, where $\mathbb{V}_{m,l}
=
\mathbb{E}\!\left[
\left|
e_{m,l}^{(k)} - \mathbb{E}[e_{m,l}^{(k)}]
\right|^2
\right]$ is the theoretical variance of the $m$-th color estimator. For all methods and all values of $n_c$ shown in Figure~\ref{fig:trace}, 
the number of samples $N_m$ per color class $m$ is chosen to be uniform across all spin and color indices within the same class, and the total number of linear system solves $N_{\mathrm{tot}} = 12\sum_{m=1}^{n_c} N_m$ is held fixed across all data points for a chosen value of $N_{\text{tot}}$, ensuring a comparison at equal computational cost.

The first observation is that both stochastic probing schemes outperform plain Hutchinson substantially across the entire range of $n_c$, while $\epsilon^2$ gradually decreases as $n_c$ grows and more off-diagonal contributions are cancelled. This confirms the effectiveness of stochastic probing as a variance reduction strategy: by constructing sampling vectors that are orthogonal on all node pairs at graph distance at most $d$, both methods eliminate the contributions to the variance arising from 
off-diagonal entries of $D^{-1}$ at small graph distance, which dominate due to the exponential decay~\eqref{eq:offdiag-decay} and which the plain Hutchinson estimator is unable to suppress.

\begin{figure}[h!]
    \centering
    \includegraphics[width=0.6\textwidth]{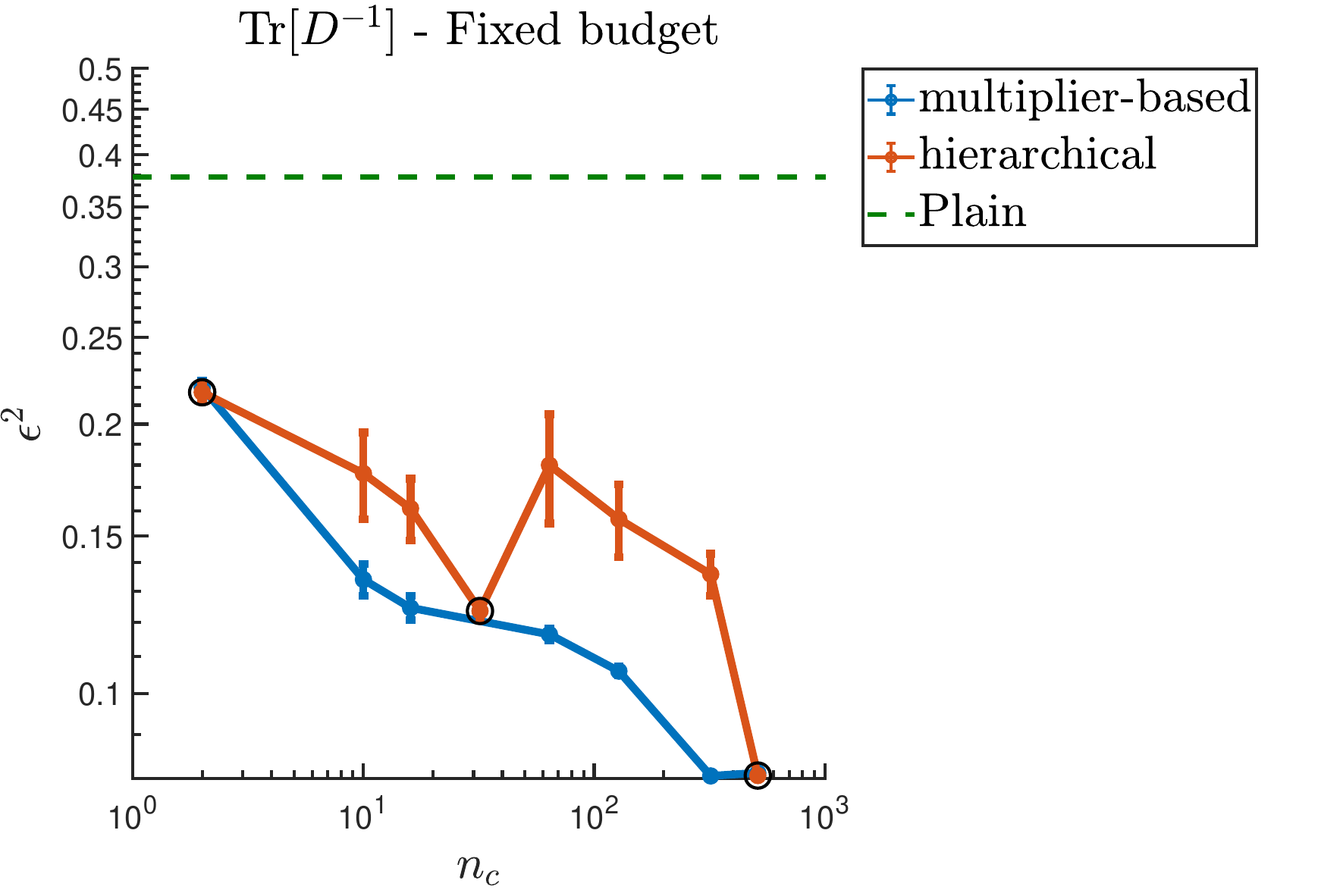}
    \caption{Relative accuracy $\epsilon^2$ as a function of the number of colors $n_c$ for the estimation of $\mathrm{Tr}[D^{-1}]$ on a $64 \times 32^3$ torus at fixed computational budget. Multiplier-based probing (blue), hierarchical probing (orange), and plain Hutchinson (dashed green) are compared. Circled markers on the hierarchical probing curve indicate the power-of-two coloring distances for which a comparable accuracy between the two probing methods is achieved.}
    \label{fig:trace}
\end{figure}

We now turn to the comparison between the two probing schemes. At the values $n_c =2$ and $512$ corresponding to a hierarchical probing power-of-two coloring 
distances $d = 1$ and $d=4$, the two methods achieve essentially the same variance reduction 
using the same number of colors. The coloring distances achieved in the multiplier-based approach are
significantly larger than those in hierarchical probing; see Table~\ref{tab:colors}, and this is consistent with assumption~\eqref{eq:d_condition} of Lemma~\ref{lem:multiplier_wins}. 
These coinciding points are highlighted by circled markers in Figure~\ref{fig:trace}. 

The middle circled point is an exception: it corresponds to a distance-$2$ coloring under hierarchical probing which uses 
$32$ colors, a number of colors that falls between those required by the multiplier-based distance-$3$ and distance-$4$ colorings on a $64 \times 32^3$ 
torus (cf.\ Table~\ref{tab:colors}). We thus  cannot directly compare with a multiplier-based coloring using the same number of colors. We see, however, that in this situation the variance achieved with hierarchical probing is quite close to that for multiplier-based coloring for a distance-3 coloring which requires 16 colors only. 

% however,  of equal cost achieves the same distance. We note, however, that while the achieved relative accuracy coincides at these points, the two methods do so with the same number of colors but different coloring distances: 
% the multiplier-based scheme achieves a strictly greater coloring distance than hierarchical probing for the same $n_c$, consistently with assumption~\eqref{eq:d_condition} of Lemma~\ref{lem:multiplier_wins}. 
The agreement in accuracy confirms that, as made precise by equation~\eqref{eq:probing_variance}, what determines the variance is not the coloring distance per se but the set of node pairs whose contributions to the variance are eliminated. The additional pairs eliminated by the larger multiplier-based distance contribute entries of $D^{-1}$ that are exponentially suppressed by~\eqref{eq:offdiag-decay} and therefore negligible in practice.

The difference between the two methods emerges at the intermediate values of $n_c$, and it is here that the numerical results confirm the theoretical prediction of Lemma~\ref{lem:multiplier_wins}. In many realistic LQCD calculations the available computational budget 
typically allows for $\mathcal{O}(10^2)$-$\mathcal{O}(10^3)$ linear solves per observable~\cite{Gambhir2016, Green2017}, 
typically leading to an even lower per-trace budget once the cost is distributed across the multiple operators contributing to the 
observable. In the hierarchical probing framework, this computational budget often falls short of the minimum $n_c^{(\mathrm{HP})} = 2^{n_D(i-1)+1}$ solves required to complete the next hierarchical level, and one is forced to work with a number of vectors that lies between two consecutive power-of-two checkpoints. In many LQCD calculations the budget is so tight that only a single noise sample per color ($N = 1$) is affordable, so that variance control relies entirely on the coloring distance achieved by the probing scheme rather than on stochastic averaging. This makes the choice of coloring scheme 
particularly consequential: a scheme that guarantees a larger coloring distance for the same number of vectors directly translates into a lower theoretical variance $\mathbb{V}_m$, even if this variance remains unobservable within a single sample. In this context, as established in Section~\ref{subsec:comparison}, using a partial set of level-$i$ vectors does not constitute a valid distance-$2^{i-1}$ coloring, and the variance 
reduction is correspondingly degraded relative to what the completed hierarchical level would achieve.

The multiplier-based method is entirely free from this trade-off, at the cost of not supporting the reuse of solutions across different values of $n_c$, since it does not possess a nesting property analogous to~(HP2). Since every value of $n_c$ in the multiplier-based scheme corresponds to a genuine distance-$d$ coloring, the relative accuracy decreases smoothly and monotonically with $n_c$, with no intermediate loss of quality, precisely as predicted by the strict variance inequality~\eqref{eq:multiplier_wins}. At intermediate values of $n_c$ this translates into a substantial improvement in relative accuracy over hierarchical probing anticipated in Section~\ref{sec:coloring}, precisely in the budget regime that is most commonly encountered in LQCD calculations.
\section{Conclusions} \label{sec:conclusions}

In this work we have addressed the problem of estimating $\mathrm{Tr}[D^{-1}]$, where $D$ is the Wilson-Dirac matrix of LQCD.
Since the direct computation of $D^{-1}$ is entirely infeasible for the matrix sizes encountered in realistic LQCD simulations, we have employed the Hutchinson stochastic estimator combined with stochastic probing. The latter replaces unstructured random vectors with structured sampling vectors constructed from distance-$d$ colorings of the lattice graph, exploiting the exponential off-diagonal decay of $D^{-1}$ to cancel the dominant contributions to the variance from nearby node pairs algebraically.

The central numerical finding of this work is that the recently developed multiplier-based coloring scheme provides a more flexible and, at any intermediate computational budget, more accurate alternative to hierarchical probing. The two methods agree in accuracy when the number of colors corresponds to 
a hierarchical power-of-two coloring distance, where both produce a valid distance-$d$ coloring and therefore achieve identical variance reduction, even though the multiplier-based scheme achieves a strictly larger coloring distance for the same number of colors. At all intermediate number of colors, hierarchical probing is forced to rely on partial colorings that do not achieve a valid graph distance, resulting in an irregular and non-monotonic behavior of the accuracy as a function of the number of colors. The multiplier-based scheme, by contrast, guarantees a genuine distance-$d$ coloring at every value of $n_c$. This yields a smooth monotonic reduction of the variance and a visible and consistent improvement in relative accuracy over hierarchical 
probing at fixed computational budget in the intermediate regime. Both methods represent a substantial improvement over plain Hutchinson estimation, confirming the effectiveness of stochastic probing as a variance reduction strategy for trace estimation on sparse matrices with off-diagonal decay.

Looking ahead, the methods applied here open a natural path towards the estimation of other physically relevant quantities in LQCD. The most immediate target is the computation of $\mathrm{Tr}[\Gamma_5 D^{-1}]$ and its time-slice restriction $\mathrm{Tr}[\Gamma_5(t) D^{-1}(t,t)]$: since 
$\Gamma_5$ acts locally on the spin degrees of freedom at each lattice site, the 
graph structure underlying the probing construction is unchanged, and the multiplier-based coloring scheme applies without modification. These operators are directly related to the pseudoscalar condensate and to the disconnected contribution to the quark loop, key quantities in the study of chiral symmetry breaking and the QCD phase structure. Additionally, an important further direction is the extension of the multiplier-based coloring scheme to twisted mass fermion formulations, where the Wilson-Dirac operator is augmented by a chirally twisted mass term and the structure of the relevant traces is modified accordingly.

\begin{acknowledgement}
This work is supported by AQTIVATE (\url{https://aqtivate.ucy.ac.cy/}) which has received funding from the European Union’s research and innovation programme under the Marie Skłodowska-Curie Doctoral Networks action and Grant Agreement No 101072344. The computations for the measurements of the variance were carried out on the PLEIADES cluster at the University of Wuppertal, which was supported by the Deutsche Forschungsgemeinschaft (DFG, grant No. INST 218/78-1 FUGG) and the Bundesministerium für Bildung und Forschung (BMBF). G.R-H. acknowledges financial support from the EoCoEIII project, which has received funding from the European High Performance Computing Joint Undertaking under grant agreement No. 101144014. The authors gratefully acknowledge the Gauss Centre for Supercomputing e.V. (\url{www.gauss-centre.eu}) for funding this project by providing computing time through the John von Neumann Institute for Computing (NIC) on the GCS JUWELS at Jülich Supercomputing  Centre (JSC).
\end{acknowledgement}

\vspace{\baselineskip}
%% The style of the following references should be used in all documents.

\bibliographystyle{unsrt}
\bibliography{biblio}

%\begin{thebibliography}{1}

%\bibitem{bib1}% 
 %F.\,M. Firstauthorfamilyname, F.\,M. Secondauthorfamilyname, and
  %C.~Lastauthorfamilyname,
 %Abbreviatedjournalname \textbf{volume}, page (year).

%\bibitem{bib2}% 
 %F.~Examplename and  I.\,E. Anotherauthorname,
 %phys. stat. sol. (a) \textbf{1}, 111 (2050).
%\end{thebibliography}

\end{document}